%% file: main.tex
\documentclass[11pt]{article}

\title{Distributed Dominating Set With Optimal Rounds and Message Size in Bounded Arboricity Graphs}

\author{Sharareh Alipour\thanks{Tehran Institute for Advanced Studies (TeIAS), Khatam University, \texttt{sharareh.alipour@gmail.com}}
\and Ermiya Farokhnejad\thanks{University of Warwick, \texttt{Ermiya.Farokhnejad@warwick.ac.uk}}}

\date{}

\input{commands}

\begin{document}

\maketitle

\begin{abstract}
    We study the distributed minimum dominating set problem on graphs of arboricity $\alpha$.
    Dory, Ghaffari, and Ilchi [PODC'22] showed that any algorithm achieving a constant or poly-logarithmic approximation factor needs at least $\Omega(\log\Delta/\log\log\Delta)$ rounds in graphs of maximum degree $\Delta$ and arboricity $\alpha$, even when $\alpha=2$ and even when the message sizes are unbounded.
    Although there is a variety of algorithms with a near-optimal round complexity of $O(\log\Delta)$, it is natural to ask: \textbf{What is the best approximation factor in the optimal round complexity of $O(\log\Delta/\log\log\Delta)$?}
    
    We make progress in answering this question by describing a deterministic algorithm that obtains a $O\left( \alpha \log \Delta / \log\log \Delta \right)$ approximation \textbf{without prior knowledge} of $\alpha$ with \textbf{optimal round complexity} of $O\left( \log \Delta / \log\log \Delta  \right)$ and \textbf{optimal message size} of $1$ bit per round.
    
    Among all of the previous results, the only algorithm that achieves the optimal round complexity of $O\left( \log \Delta / \log\log \Delta  \right)$ \textbf{without} prior knowledge of $\alpha$ is due to Lenzen and Wattenhofer [DISC'10] that obtains a $O(\alpha \log^{1+\varepsilon}\Delta / (\varepsilon\log\log \Delta))$ approximation in $O(\log\Delta/(\varepsilon\log\log\Delta))$ rounds and $O(\log(\varepsilon^{-1}\log\Delta))$ message size.
    Our algorithm simplifies and improves upon this result.
    The only downside of our algorithm compared to the algorithm of Lenzen and Wattenhofer is that it needs prior knowledge of $\Delta$.

    The previous state-of-the-art algorithm by Dory, Ghaffari, and Ilchi [PODC'22] has a dependency on $\log n$ in the round complexity for \textbf{unknown} $\alpha$, which is far from optimal.
\end{abstract}

\newpage

\input{introduction}

\input{algorithm}

\bibliographystyle{alpha}
\bibliography{references.bib}

\newpage

\appendix
\section{State-of-the-Art for Distributed Dominating Set on Graphs With Arboricity at Most $\alpha$}\label{appendix:table}

\input{table-comprehensive}

\end{document}

%% file: commands.tex
\usepackage[utf8]{inputenc}

\usepackage[ruled,vlined,linesnumbered,noend]{algorithm2e}
\SetKwInOut{Parameter}{Parameters}

\usepackage{amsthm}
\usepackage{amsmath}
\usepackage{amssymb}
\usepackage{enumerate}
\usepackage{graphicx}
\usepackage[margin=1in]{geometry}
\usepackage{dsfont}
\usepackage{booktabs} 
\usepackage{tabularx}
\usepackage{multirow}
\usepackage{enumitem}
\usepackage{tablefootnote}

\usepackage{soul}

\usepackage[dvipsnames]{xcolor}
\definecolor{ForestGreen}{rgb}{0.1333,0.5451,0.1333}
\definecolor{DarkRed}{rgb}{0.65,0,0}

\usepackage{hyperref}
\hypersetup{
    colorlinks=true,
    linkcolor=DarkRed,
    filecolor=magenta,      
    urlcolor=cyan,
    citecolor=ForestGreen,
}

\usepackage{cleveref}

\usepackage{framed}
\usepackage[framemethod=tikz]{mdframed}
\usepackage{tikz-cd}
\usetikzlibrary{positioning,calc,arrows.meta}

\usepackage{makecell}

\newenvironment{wrapper}[1]
{
	\begin{center}
		\begin{minipage}{\linewidth}
			\begin{mdframed}[hidealllines=true, backgroundcolor=gray!20, leftmargin=0cm,innerleftmargin=0.4cm,innerrightmargin=0.4cm,innertopmargin=0.4cm,innerbottommargin=0.4cm,roundcorner=10pt]
				#1}
			{\end{mdframed}
		\end{minipage}
	\end{center}
}

\DeclareMathAlphabet{\mathmybb}{U}{bbold}{m}{n}

\newtheorem{theorem}{Theorem}[section]
\newtheorem{lemma}[theorem]{Lemma}

\newtheorem{corollary}[theorem]{Corollary}

\newtheorem{remark}[theorem]{Remark}

\newtheorem{claim}[theorem]{Claim}

\newcommand{\polylog}[0]{\mathrm{polylog}}

\newcommand{\congest}[0]{$\mathsf{CONGEST}$\xspace}
\newcommand{\local}[0]{$\mathsf{LOCAL}$\xspace}

%% file: introduction.tex
\section{Introduction}

The minimum dominating set (MDS) problem is one of the classic and well-studied problems in graph algorithms, where given a graph $G = (V,E)$, the goal is to find a set $D \subseteq V$ of the smallest possible size so that every node is dominated by $D$, i.e., either it is in $D$ or has a neighbor in $D$.
In the centralized setting, the well-known greedy algorithm obtains $\log(\Delta+1)$ approximation for graphs of maximum degree $\Delta$ \cite{Joh74}, which is optimal up to constant factors, as it is NP-Hard to achieve $c \log \Delta$ approximation for some constant $c$.

\paragraph{Distributed Model.}
In the distributed setting, there is a communication network with $n$ nodes that is identical to the input graph.
Nodes communicate with each other in synchronous rounds, where the message size can be bounded by $O(\log n)$ bits (known as \congest model) or unbounded (known as \local model).
Every node knows its neighbors in the beginning.

\paragraph{Distributed MDS on General Graphs.}
In the distributed setting, the first efficient algorithm for MDS was introduced by Jia, Rajaraman, and Suel \cite{JRS02} who gave a randomized $O(\log \Delta)$ expected approximation in $O(\log n \log (\Delta+1))$ rounds w.h.p.~in the \congest model.
Kuhn and Wattenhofer \cite{KW03} gave a $O(k\Delta^{2/k}\log \Delta)$ expected approximation in $O(k^2)$ rounds for any arbitrary $k$ in the \congest model. 
This is the first non-trivial approximation in a constant number of rounds when $k = O(1)$.
Later, Kuhn, Moscibroda, and Wattenhofer \cite{KMW06} gave a randomized $(1+\varepsilon)(1 + \ln(\Delta+1))$ approximation in $O(\varepsilon^{-4}\log^2\Delta)$ and $O(\varepsilon^{-2}\log n)$ rounds in the \congest and the \local model respectively.

In the deterministic world, Deurer, Kuhn, and Maus \cite{DKM19} provided two deterministic $(1+\varepsilon)(1 + \ln(\Delta+1))$ approximation algorithms for any $\varepsilon > 1/\polylog\Delta $ in the \congest model with $2^{O(\sqrt{\log n \log\log n})}$ and $O((\Delta + \log^* n) \cdot \polylog\Delta)$ round complexity respectively.
The round complexity $2^{O(\sqrt{\log n \log\log n})}$ in \cite{DKM19} can be reduced to $O(\mathrm{poly} \log n)$ by using improved deterministic algorithms such as \cite{RG20} for computing a network decomposition of $G^2$.\footnote{$G^2 = (V, E')$, where $V$ is the same vertex set of $G$, and $(u,v) \in E'$ if and only if the distance of $u$ and $v$ in $G$ is at most $2$.} 
Faour, Ghaffari, Grunau, Kuhn, and Rozho\v{n} \cite{FGGKR23} achieved a deterministic $O(\log \Delta)$ approximation in $\tilde{O}(\log^3 \Delta + \log^* n)$ rounds.\footnote{Their algorithm is primarily designed for the minimum set-cover problem, which concludes the mentioned result for MDS.} 
If we allow exponential-time local computation, Ghaffari, Kuhn, and Maus \cite{GKM17} showed that it is possible to achieve $(1+o(1))$ approximation in poly-logarithmic rounds in the \local model, which can be derandomized with the network decomposition of \cite{RG20}.
On the lower bound side, Kuhn, Moscibroda, and Wattenhofer \cite{KMW16} showed that achieving a poly-logarithmic approximation in the \local model requires 
$$\min \{\Omega( \log \Delta/ \log\log \Delta ), \Omega( \sqrt{\log n/ \log \log n})\}$$ rounds.

\paragraph{Distributed MDS on Special Graph Families.}
The MDS problem has been under significant attention on a variety of special graph families, such as planar graphs, bounded genus graphs, bounded expansion graphs, bounded arboricity graphs (which is the focus of our paper), and more (see, e.g.~\cite{Bak94, LOW08, CHW08, BE08, LW10, LPW13, Waw14, AORS18, ASS19, AFK20, AJ20, Amiri21, BCGW21, GMP22, DGI22, HKOSV25}).

\paragraph{Graphs of Bounded Arboricity.}
The arboricity of a graph is the smallest number of forests into which the edges of the graph can be partitioned.
It is believed that some natural real world graphs, such as the world wide web graph, social networks, and transaction networks have bounded arboricity.
This led to extensive research on this family of graphs not only in the distributed setting but also in other settings such as streaming, dynamic, and sublinear 
(see e.g.~\cite{Epp94, BF99,GG06,KMW06,LW10,HTZ14,PS16,PPS16,BU17,OSSW18,ELR18,MV18,SW19,ERR19,ERS20,BPS20,ELR20,BS20,BCG20,KS21, MSW21,GBSH21,CCMU21,DGI22,ERR22,BO22,ELR24,sKMST24,GG24,BCPS24,BCPSESA24,Gil24,BCM25}).

\paragraph{Distributed MDS on Graphs of Bounded Arboricity.}
The first algorithm for MDS on bounded arboricity graphs was introduced by Lenzen and Wattenhofer~\cite{LW10}, who gave two algorithms, a randomized $O(\alpha^2)$ approximation and a deterministic $O(\alpha  \log \Delta)$ approximation in $O(\log n)$ and $O(\log \Delta)$ round complexity respectively. 
Later, Amiri \cite{Amiri21} provided a deterministic $O(\alpha^2)$ approximation in $O(\log n)$ rounds.
None of these algorithms require prior knowledge of the value of $\alpha$.
Bansal and Umboh \cite{BU17} showed that it is NP-Hard to achieve $\alpha-1-\varepsilon$ approximation and provided a centralized $3\alpha$ approximation LP based algorithm that can be improved to $(2\alpha+1)$ approximation by optimizing the parameters \cite{Dvo19}.
Together with the $(1+\varepsilon)$ approximation of Kuhn, Moscibroda, and Wattenhofer \cite{KMW06} for solving the LP, it is possible to implement the centralized algorithm of \cite{BU17} in the \congest model to achieve $(2\alpha+1)(1+\varepsilon)$ approximation in $O(\varepsilon^{-4}\log^2\Delta)$ rounds.
This algorithm needs prior knowledge of the value of $\alpha$.
Morgan, Solomon, and Wein \cite{MSW21} provided an $O(\alpha)$ approximation in $O(\alpha  \log \Delta)$ rounds that needs prior knowledge of $\alpha$.
Dory, Ghaffari, and Ilchi \cite{DGI22} provided a deterministic $(2\alpha+1)(1+\varepsilon)$ approximation and a randomized $\alpha(1+o(1))$ approximation in $O(\varepsilon^{-1}\log\Delta)$ and $O(\alpha  \log \Delta/\log \alpha)$ rounds respectively.
These algorithms \textbf{need prior knowledge} of $\alpha$.
\cite{DGI22} also described a deterministic $(4\alpha+2)(1+\varepsilon)$ approximation in $O(\varepsilon^{-1}\log n)$ rounds without prior knowledge of $\alpha$.
All of the aforementioned results work in the \congest model.

On the lower bound side, \cite{DGI22} showed that one needs at least $\Omega(\log\Delta/\log\log\Delta)$ rounds to obtain any constant or poly-logarithmic approximation, even in the \local model and even in the case where $\alpha=2$.
Although many of the previous results have a near-optimal round complexity of $O(\log\Delta)$, one can ask the following natural question.

\begin{wrapper}
\begin{center}
\textbf{Question.} What is the best approximation ratio achievable for MDS in the optimal round complexity of $O(\log\Delta/\log\log\Delta)$?
\end{center}
\end{wrapper}

The same question has been studied for related classic problems, such as \textbf{minimum vertex-cover} (more generally, \textbf{minimum set-cover}), 
and \textbf{maximum matching}
(See \cite{BYCHGS17,EGM18,BBEKS19}).

\subsection{Technical Overview}

Our main result is as follows.

\begin{theorem}\label{thm:mainn}
    There is a deterministic $O\!\left(\alpha \cdot \frac{\log \Delta}{\log\log\Delta}\right)$-~approximation algorithm for the minimum dominating set problem in graphs with arboricity at most $\alpha$ and maximum degree $\Delta$ that has optimal round complexity and message size, i.e., the algorithm runs in $O\!\left(\frac{\log \Delta}{\log\log\Delta}\right)$ rounds, and the message size is $1$ bit per round.
    Moreover, the algorithm does not require prior knowledge of $\alpha$, but it requires prior knowledge of $\Delta$.
\end{theorem}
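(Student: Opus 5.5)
We design a degree-threshold peeling algorithm in the spirit of Lenzen--Wattenhofer, but with a geometric schedule of base $\beta := \lceil \log \Delta \rceil$. Set $k := \lceil \log_\beta (\Delta+1) \rceil = O(\log\Delta/\log\log\Delta)$ phases. In phase $i = 1,\dots,k$ the threshold is $\tau_i := (\Delta+1)/\beta^{i}$. Each node maintains one bit, \emph{dominated}, and each phase uses $O(1)$ rounds of $1$-bit messages. A phase works as follows. First, every undominated node broadcasts a bit saying it is undominated. Then every node $v$ counts its undominated closed neighbors $\tilde d(v)$; this needs no message beyond these bits. If $\tilde d(v) \ge \tau_i$, node $v$ joins $D$ and broadcasts a ``joined'' bit. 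Every node that hears such a bit, or joins itself, becomes dominated. Nodes only need $\Delta$ (hence $\beta$ and $k$) to compute the thresholds, and $\alpha$ is never used. In the last phase $\tau_k \le 1$, so every still-undominated node has $\tilde d \ge 1$ and joins. Hence $D$ is a dominating set after $O(k)$ rounds, which gives the claimed round complexity and message size.

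\textbf{Invariant.} At the start of phase $i$, every node has $\tilde d(v) < \tau_{i-1} = \beta \tau_i$. This holds because any node with $\tilde d(v) \ge \tau_{i-1}$ joined in phase $i-1$, after which $\tilde d(v)=0$, and $\tilde d$ never increases.

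\textbf{Charging scheme.} Fix an optimal dominating set $D^*$. Orient the edges so that every node has out-degree at most $\alpha$. Let $J_i$ be the set of nodes joining in phase $i$, and let $U_i$ be the undominated nodes at the start of phase $i$. Each $v \in J_i$ covers at least $\tau_i$ nodes of $U_i$. Every $u \in U_i$ is dominated by some $d(u) \in D^*$. We bound $|J_i|$ by double counting the pairs $(v,u)$ with $v \in J_i$, $u \in N^+[v] \cap U_i$:
\[
  |J_i|\,\tau_i \le \sum_{v\in J_i} |N[v]\cap U_i| .
\]
Split each pair according to the edge $\{v,u\}$.

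\emph{Out-edges and $v=u$.} Pairs where $u$ is an out-neighbor of $v$, or $u = v$, number at most $(\alpha+1)|J_i|$. Since $\tau_i \ge 2(\alpha+1)$ whenever $\tau_i$ is large, these pairs are absorbed. When $\tau_i$ is small, i.e.\ $\tau_i = O(\alpha)$, we instead use a separate bound (below).

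\emph{In-edges.} Pairs where $v$ is an out-neighbor of $u$ number at most $\alpha|U_i|$ in total. Moreover, $|U_i| \le \sum_{d\in D^*} \tilde d(d) < |D^*|\,\beta\tau_i$ by the invariant.

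Combining, $|J_i| \le 2\alpha |U_i|/\tau_i = O(\alpha\beta)|D^*|$ per phase. This loses a factor $\beta\cdot k$, which is too much, so we refine. Nodes of $U_i$ become dominated in phase $i$ and are charged only once. Summing $|U_i \setminus U_{i+1}|/\tau_i$ over phases, with the bound that $d \in D^*$ contributes at most $\tilde d_i(d) < \beta\tau_i$ undominated nodes in phase $i$, the total charge to each $d$ is a sum over phases $\sum_i n_i(d)/\tau_i$. Here $n_i(d)$ is the number of its nodes dominated in phase $i$, with $\sum_{j\ge i} n_j(d) < \beta\tau_i$. This sum is at most $O(\beta)$ only if concentrated, which again loses. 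So the honest target is $O(k)$ per $d$: each phase contributes at most $\beta\tau_i/\tau_i$... The correct accounting is the main obstacle.

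\textbf{Main obstacle and intended fix.} To get $O(\alpha k)$ rather than $O(\alpha\beta)$, I would bound the joiners in phase $i$ adjacent to $d$'s neighborhood through a \emph{tie-breaking/symmetry} argument instead. First, a new joiner $v$ whose undominated coverage mostly comes via in-edges is charged to those $u$, and each $u$ has only $\alpha$ out-neighbors. Second, a joiner with many out-edge-covered nodes is charged to the $\le \alpha$ \dots{} Concretely, I would charge each $v\in J_i$ to the $D^*$-dominators of its covered nodes. Each $d \in D^*$ then receives charge at most $\tilde d_i(d)\cdot \alpha /\tau_i + O(\alpha)$ in phase $i$. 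I would then use the invariant in the strong form $\tilde d_i(d) < \beta\tau_i$ only for $d$'s own closed neighborhood, while noting that $d$ itself would have joined at threshold $\tau_{i}$ if $\tilde d_i(d)\ge\tau_i$. Hence, in every phase except the single phase where $d$'s neighborhood is cleared, the charge is $O(\alpha)$, and in that one phase it is $O(\alpha\beta)$. Making this exceptional phase cost only $O(\alpha k)$ is the main obstacle. The natural choice to balance it is $\beta=\log\Delta$ with $O(\log\Delta/\log\log\Delta)$ phases, which gives a total of $O(\alpha\log\Delta/\log\log\Delta)$ per $d\in D^*$ if the exceptional phase is handled by the $d$-joins-itself observation.
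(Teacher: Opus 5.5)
Your round and message-size accounting is fine, but your algorithm is exactly the variant the paper's remark warns against, and its approximation ratio is unbounded. You let \emph{every} node with $\tilde d(v)\ge\tau_i$ join. The paper instead has each undominated node $u$ vote for a single node of $N^+(u)$ with residual degree at least $\tau_i$, and only voted nodes join.

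The ``separate bound'' you defer for phases with $\tau_i=O(\alpha)$ cannot exist for your rule. Take a hub $h$ adjacent to $qm$ nodes $w$. Split them into $q=\lfloor\Delta/m\rfloor\approx\beta$ groups of size $m=\lceil\tau_1\rceil-2$, and make each group adjacent to its own node $u_r$. The edges at $h$ form a star and the edges at the $u_r$ form disjoint stars, so $\alpha\le 2$. In phase~$1$ only $h$ joins, since $\tilde d(u_r)=m+1<\tau_1$ and $\tilde d(w)=3<\tau_1$ for large $\Delta$. Afterwards every $u_r$ and every $w$ has $\tilde d=1<\tau_i$ for all $i<k$, so nothing happens until phase~$k$. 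There $\tau_k\le 1$, and all $q(m+1)=\Theta(\Delta)$ of these nodes join. Yet $\{h,u_1,\dots,u_q\}$ dominates with $O(\beta)$ nodes, so the ratio is $\Omega(\Delta/\beta)$.

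The missing idea is voting. Every node of $S_i$ received a vote from a node of $C_i$, and each node votes at most once, so $|S_i|\le|C_i|$. This covers $\tau_i<5\alpha$. For $\tau_i\ge 5\alpha$, your out/in-edge count (\Cref{claim:S-C}) applies. Together they give $|S_i|\le 5\alpha|C_i|/\tau_i$ in \emph{every} phase (\Cref{cor:Si-compare-to-Ci}).

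The accounting you leave open is not blocked by the exceptional phase; you simply pay for it and choose $\beta$ accordingly. For $d\in D^\star$, let $x_j$ be the number of nodes of $N^+(d)$ dominated in phase $j$, and let $t$ be the last phase with $x_t>0$. For every $k<t$ one has $x_k+\dots+x_t\le\tau_k$. Otherwise $d$ has residual degree above $\tau_k$ in phase $k$, so every undominated node of $N^+(d)$ has a qualifying candidate, votes, and is dominated in phase $k$. This is how the paper recovers your ``$d$ joins itself'' observation under voting, where $d$ need not join.

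Hence $\sum_j x_j/\tau_j\le (t-1)+\beta$, with $\tau_0:=\Delta+1$. The last phase costs $\beta$ because only $x_t\le\tau_{t-1}=\beta\tau_t$ is available. This is \Cref{lem:Ci-compare-to-Di} together with \Cref{lem:v-star}, and it yields $|S|\le 10\alpha(\beta+\log_\beta(\Delta+1))|D^\star|$.

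The paper never reduces the last phase to $O(\alpha k)$. Instead it balances the two terms with $\beta=\Theta(\log\Delta/\log\log\Delta)$, for which $\log_\beta(\Delta+1)=\Theta(\log\Delta/\log\log\Delta)$ as well. Your choice $\beta=\lceil\log\Delta\rceil$ is not the balancing point. The $\Theta(\alpha\beta)$ term then dominates $\alpha k$, so this analysis yields only $O(\alpha\log\Delta)$, not the claimed $O(\alpha\log\Delta/\log\log\Delta)$.
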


Our algorithm runs in the optimal number of rounds among all algorithms that achieve a poly-logarithmic approximation factor due to the lower bound of \cite{DGI22}.
The message size of our algorithm is optimal as well.
Moreover, our algorithm has the best approximation ratio among all previous results in the optimal round complexity of $O(\log\Delta/\log\log\Delta)$, when the value of $\alpha$ is unknown, but the value of $\Delta$ is known to the algorithm.
We achieve our result by providing a deterministic algorithm for the distributed minimum dominating set problem on graphs of bounded arboricity that has a trade-off between the approximation ratio and the round complexity in \Cref{thm:main1}.

\begin{theorem}\label{thm:main1}
For any $\beta > 1$, there is a deterministic $10\alpha(\beta + \log_\beta (\Delta+1))$-approximation algorithm for the minimum dominating set problem in graphs with arboricity at most $\alpha$ and maximum degree $\Delta$ that runs in $4\!\left\lceil\log_\beta (\Delta+1)\right\rceil$ rounds.
Moreover, the message size of the algorithm is only one bit per round, and the algorithm does not require prior knowledge of $\alpha$, but it requires prior knowledge of $\Delta$.
\end{theorem}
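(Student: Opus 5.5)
We design a phase-based greedy algorithm in the spirit of the Lenzen--Wattenhofer deterministic algorithm, but with degree thresholds decreasing geometrically by factor $\beta$. Let $k=\lceil\log_\beta(\Delta+1)\rceil$. Phases $i=1,\dots,k$ use threshold $\tau_i=(\Delta+1)/\beta^{i}$. Each node $v$ maintains whether it is dominated, and its \emph{residual degree} $\delta(v)$, the number of undominated nodes in its closed neighborhood $N^+(v)$. In phase $i$, every node with $\delta(v)\ge\tau_i$ becomes a \emph{candidate*}. Every undominated node $u$ that has a candidate neighbour (or is one) then becomes dominated. Since $\tau_k\le 1$, every still-undominated node joins in the last phase, so the output $D$ is a dominating set.

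We still need to control over-selection when many candidates cover the same node. Here we add a second step inside each phase: an undominated node $u$ with at least one candidate in $N^+(u)$ is dominated in this phase, and it \emph{charges} its domination to one candidate. To keep messages at one bit, we avoid IDs by splitting each phase into $4$ one-bit rounds. In round 1, nodes broadcast ``I am undominated''. After this round each node knows $\delta(v)$, because $\Delta$ is known and it can count. In round 2, nodes broadcast ``I am a candidate''. In round 3, dominated-now nodes broadcast ``I became dominated''. Round 4 is reserved for handling the arboricity-based selection described next. This gives $4k$ rounds.

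For the analysis, fix an optimal dominating set $O$, and charge $|D|$ via a fractional argument. For a node $v$ joining $D$ in phase $i$, we have $\delta(v)\ge \tau_i$, while every dominator in $O$ had residual degree below $\tau_{i-1}=\beta\tau_i$ at the start of phase $i$. Otherwise it would have joined earlier and dominated its neighbours. The exception is nodes whose residual degree was high but decreased; we handle these by the standard observation that residual degrees only decrease, so any $w$ has $\delta(w)<\beta\tau_i$ at phase $i$. Distribute a charge of $1/\delta(v)$ from each new $D$-node to each undominated node it covers. A node $u$ dominated in phase $i$ receives charge from all candidates in $N^+(u)$. The key step is to bound the number of such candidates. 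This is where arboricity enters. Orient edges so each node has out-degree at most $\alpha$ (conceptually only, for the analysis). Candidates adjacent to $u$ either are out-neighbours of $u$ ($\le\alpha$ of them), or have $u$ as out-neighbour. Summing over $u$, a candidate $v$ with $\delta(v)\ge\tau_i$ receives total charge at most $1$ from nodes where it is the ``out'' endpoint. Double counting gives $|D_i|\le O(\alpha)\cdot(\text{charge mass in phase } i)+\ldots$. The approach is to show that each $w\in O$ receives total charge over all phases at most about $\alpha(\beta+\log_\beta(\Delta+1))$ up to the constant $10$. There are two contributions. In each phase, the undominated nodes of $N^+(w)$ number fewer than $\beta\tau_i$ and each gets charge $\le 1/\tau_i$ per candidate; this gives the $\beta$ term, but only in the phase where $\delta(w)$ first drops, since it telescopes. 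The second contribution is a geometric-sum term of $O(1)$ per phase, giving the $\log_\beta(\Delta+1)$ term. The factor $\alpha$ comes from the orientation.

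The main obstacle is bounding how many candidates each node has, because a node $u$ may have many in-neighbour candidates; without coordination this can blow up by $\Delta$. I expect the fix to be a charging in which $u$'s charge is split as $\sum_{v\text{ cand}\ni u}1/\delta(v)$, bounded via the orientation: in-neighbour candidates $v$ of $u$ are counted against $v$'s own $\ge\tau_i$ neighbours. Relatedly, $|D_i|\le\sum_v 1\le \frac{1}{\tau_i}\sum_{v\in D_i}\delta(v)$, and $\sum_{v\in D_i}\delta(v)$ counts edges into undominated nodes. By arboricity, at most $\alpha\cdot\#\text{undominated}$ of these edges are oriented out of undominated nodes, and the rest are bounded using $\delta<\beta\tau_i$ for the $O$-dominators. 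Making this double count yield the stated constant $10$ is the step I would work out most carefully, possibly adjusting the threshold to $\tau_i/2$ or adding round 4 to let nodes with few in-candidates abstain.
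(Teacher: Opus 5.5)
There is a genuine gap, and it lies in the algorithm itself, not only in the bookkeeping. As written, every node with $\delta(v)\ge\tau_i$ joins the solution: your charging has $u$ receive charge ``from all candidates in $N^+(u)$'', and round~4 is never specified. Charging each dominated node to one candidate in the analysis does not change which nodes the algorithm adds. This no-voting rule does not achieve the claimed ratio, because once $\tau_i=O(\alpha)$ the candidate set can be arbitrarily larger than the set of nodes it newly dominates.

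Here is an example. Let $b_1,\dots,b_m$ each be adjacent to two nodes $h$ and $a$. Attach pendant leaves to $h$ so that $\deg(h)=\Delta$, and choose $m+1<(\Delta+1)/\beta$ and $\Delta+1>3\beta$. The graph is a union of two stars, so its arboricity is $2$.
\begin{itemize}
\item In phase~$1$ only $h$ is a candidate. Afterwards $a$ is the only undominated node, and $\delta(a)=\delta(b_j)=1$ for all $j$.
\item In the first phase with $\tau_i\le 1$, the node $a$ and all $m$ of the $b_j$ are candidates, and all of them join.
\item So $|D|=m+2$, while $\{h,a\}$ is optimal. The ratio is $\Omega(\Delta/\beta)$, far above $20(\beta+\log_\beta(\Delta+1))$ for, say, constant $\beta$.
\end{itemize}
In this example $a$ alone receives charge $m$, so no per-optimal-node bound can hold. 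Your orientation double count gives at best $|D_i|(\tau_i-1-\alpha)\le\alpha\,|U_i|$, where $U_i$ is the set of newly dominated nodes; this is vacuous once $\tau_i\le\alpha+1$. Since $\alpha$ is unknown, you cannot stop the phases before that point. Rescaling the threshold to $\tau_i/2$ does not help either; just shrink $m$ by a factor of two.

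The missing idea is to let the \emph{undominated} nodes break the symmetry. In each phase, every undominated $u$ sends a one-bit vote to exactly one candidate in $N^+(u)$, and only voted candidates join. This fits into four one-bit rounds: candidate status, vote, join announcement, dominated status. Every undominated node with a candidate in $N^+(u)$ is still dominated in that phase. Now $|S_i|\le|C_i|$, because each joined node has its own voter in $C_i$. Each joined node also still has at least $\tau_i-1$ neighbours in $C_i$. So the arboricity count gives $|S_i|\le\frac{5\alpha}{\tau_i}|C_i|$ when $\tau_i\ge 5\alpha$, and $|S_i|\le|C_i|$ gives the same bound otherwise.

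With this aggregate inequality you no longer need per-candidate charges; it remains to bound $\sum_j |C_j|/\tau_j$ against the optimum. For an optimal node $w$, let $x_j=|N^+(w)\cap C_j|$ and let $t$ be the last index with $x_t>0$. Then $x_k+\dots+x_t\le\tau_k$ for every $k<t$; otherwise $w$ is a candidate in phase $k$, and its neighbour in $C_t$ would be dominated in phase $k$. Likewise $x_t\le\tau_{t-1}=\beta\tau_t$. Hence $\sum_j x_j/\tau_j\le \ell-1+\beta$, which is the precise form of your ``$O(1)$ per phase plus $\beta$ once'' intuition.
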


By plugging $\beta = \Theta \left( \frac{\log\Delta}{\log\log\Delta} \right)$ in the above theorem, we get $O\!\left(\alpha \cdot \frac{\log \Delta}{\log\log\Delta}\right)$-approximation in $O\!\left(\frac{\log \Delta}{\log\log\Delta}\right)$ rounds, which concludes \Cref{thm:mainn}.
Our algorithm for \Cref{thm:main1} works as follows.
It runs in $\log_\beta(\Delta+1)$ iterations from $i = 1$ up to $\log_\beta(\Delta+1)$, and builds a dominating set $S$.
At iteration $i$, every undominated (by $S$) node $u \in V$ votes for one of its neighbors that dominates at least $(\Delta+1)/\beta^i$ undominated nodes (if any).
Then, every node that has been voted at least once, becomes part of the solution $S$.

\medskip
\noindent
\textbf{Comparison to \cite{LW10}.}
One of the algorithms by Lenzen and Wattenhofer  \cite{LW10} achieves a trade-off between the approximation ratio and the round complexity as follows.
For any $\beta \geq 2$, it is possible to get $O(\alpha\beta\log_\beta \Delta)$ approximation in $O(\log_\beta \Delta)$ rounds with message size bounded by $O(\log\log_\beta\Delta)$.\footnote{This message size is not explicitly mentioned in Corollary 3 of their paper (the mentioned algorithm), but it is easy to verify, since every node $u$ will send an integer bounded by $O(\log_\beta \Delta)$.
One needs only $O(\log \log_\beta \Delta)$ bits to represent this integer.}
Hence, by letting $\beta = \log^\epsilon \Delta$, one achieves a $O(\alpha \varepsilon^{-1} \log^{1+\varepsilon}\Delta/\log\log\Delta)$ approximation in $O(\varepsilon^{-1}\log\Delta/\log\log\Delta)$ rounds with $O(\log(\varepsilon^{-1}\log\Delta))$ message size.

The algorithm works as follows.
Every undominated node $u\in V$ considers its residual degree (defined as the number of undominated nodes that can be dominated by $u$ at this point in time), and if its residual degree is within a factor $2$ of the maximum residual degree of all nodes in its 2-hop neighborhood (nodes of distance at most $2$ of $u$), then $u$ becomes one of the candidates to be part of the solution.
Then, each undominated node votes for one of the candidates in its neighborhood (if any) arbitrarily, and the voted node becomes part of the solution.

Our algorithm is a simplification of this result.
Instead of considering the 2-hop neighborhood of a node $u$ to determine if $u$ is a candidate, we simply consider each $u$ as a candidate if it dominates $\tau$ many undominated nodes at this point in time, where $\tau$ is a threshold that decreases geometrically over time.
We provide the approximation ratio analysis of this simplified algorithm, which is also stronger than the approximation ratio of the specified algorithm of \cite{LW10}.
More precisely, the approximation ratio is reduced from $O(\alpha\beta\log\Delta)$ to $O(\alpha(\beta+\log\Delta))$.

The only downside of our algorithm compared to \cite{LW10} is the assumption of prior knowledge of $\Delta$.
Let us also mention that, in Theorem 2 of \cite{LW10}, it is stated that the approximation ratio of their algorithm is tight.
As a result, some modifications to \cite{LW10} is necessary to hope for a better approximation ratio.

\medskip
\noindent
\textbf{Comparison to \cite{Amiri21} and \cite{DGI22}.}
In the case of unknown $\alpha$, both of \cite{Amiri21} and \cite{DGI22} first apply the algorithm of Barenboim and Elkin \cite{BE08} to compute a so called \textit{forest decomposition}, i.e.~an orientation of the edges of the graph is computed in $O(\varepsilon^{-1}\log n)$ rounds such that the out-degree of each node is at most $(2+\varepsilon)\alpha$.
It is not obvious at all how we can get around this step since both of the algorithms in \cite{Amiri21} and \cite{DGI22} heavily depend on this step.
In \cite{Amiri21}, the algorithm needs this decomposition to construct a set-cover instance whose solution translates to a minimum dominating set in the original graph.
In \cite{DGI22}, every node computes a local approximation of the arboricity according to the decomposition and uses this approximate value for its own decisions (See Remark 4.5 in the arXiv version of \cite{DGI22} for more details).
On the other hand, our algorithm does not need the decomposition or any information about $\alpha$.
We should mention that the algorithm of \cite{DGI22} is simple and elegant, and if we assume \textbf{prior knowledge} of $\alpha$, one can get a $O(\alpha\log^{\varepsilon}\Delta)$ approximation in $O(\varepsilon^{-1}\log \Delta / \log\log \Delta)$ rounds.
In \cite{DGI22} the result is stated as $(2\alpha+1)(1+\varepsilon)$ approximation in $O(\varepsilon^{-1}\log \Delta)$ rounds, but the actual round complexity is $\log_{1+\varepsilon} \Delta$ and the parameter $\varepsilon$ here does \textbf{not} need to be close to zero. It remains to replace $1 + \varepsilon \gets \log^\varepsilon \Delta$ to get the mentioned result.

\paragraph{Summary of Known Results.}
\Cref{tab:MDS-without} summarizes the results for MDS on graphs of arboricity $\alpha$ in the case where the value of $\alpha$ is \textbf{unknown} to the algorithm.
A more comprehensive table is provided in \Cref{appendix:table}.

\input{table-state-of-the-art}

\subsection{Preliminaries}\label{sec:pre}

Let $G = (V, E)$ be a simple undirected graph.
For every node $v \in V$, we denote by $N(v)$ the set of neighbors of $v$, and define $N^+(v) := N(v) \cup \{v\}$.
For each subset of nodes $U \subseteq V$, we define $N^+(U) = \cup_{u \in U} N^+(u)$.
For each two subsets of nodes $U,C \subseteq V$, we define $N^+(U,C) := N^+(U) \cap C$.
For every $U \subseteq V$, let $E[U]$ be the set of edges whose both endpoints belong to $U$, i.e.~$E[U] := \{uv \in E \mid u,v \in U\}$.
We denote the arboricity of $G$ by $\alpha(G)$, which is defined as the minimum number of forests into which $E$ can be partitioned.
According to the well-known Nash-Williams' theorem \cite{NashWilliams}, we have
\begin{equation}\label{eq:arboricity-formula}
   \alpha(G) = \max_{U \subseteq V, \ |U|\geq 2} \left\lceil \frac{|E[U]|}{|U|-1} \right\rceil. 
\end{equation}

%% file: table-state-of-the-art.tex
\renewcommand{\arraystretch}{1.9}
\begin{table}[h!]
\caption{State-of-the-art for distributed minimum dominating set on graphs with arboricity at most $\alpha$, where the value of $\alpha$ is \textbf{unknown} to the algorithm. $\beta \geq 2$ and $1 > \varepsilon > 0$ are arbitrary parameters. \cite{DGI22} works for the weighted case.}\label{tab:MDS-without}
\vspace{0.3cm}
\centering
\begin{tabular}{ |c|c|c|c| }
 \hline
 Approximation & Rounds & Deterministic & Reference \\
 \hline
 $O(\alpha^2)$ & $O(\log n)$ & $\times$  & \multirow{2}{3.1em}{\cite{LW10}} \\ 
 $O(\alpha \beta \log_\beta \Delta)$ & $O(\log_\beta \Delta)$ & \checkmark & \\
 \hline
 $O(\alpha^2)$ & $O(\log n)$ & \checkmark & \cite{Amiri21} \\
 \hline
 $(4\alpha+2)  (1+\varepsilon)$ & $O\left(\frac{\log n}{\varepsilon}\right)$ & \checkmark & \cite{DGI22} \\
 \hline
 $O(\alpha(\beta + \log_\beta \Delta))$ & $O(\log_\beta \Delta)$ & \checkmark & \Cref{thm:main1} \\
 \hline
\end{tabular}
\end{table}
\renewcommand{\arraystretch}{1}

%% file: algorithm.tex
\section{Achieving $O\!\left(\alpha (\beta + \log_\beta \Delta)\right)$-Approximation in $O\!\left( \log_\beta \Delta\right)$ Rounds}

In this section, we provide the algorithm for \Cref{thm:main1} together with its analysis.

\subsection{Description of the Algorithm}

Initially, we let $S = \emptyset$ and all nodes are undominated.
Throughout the algorithm, $S$ represents the set of selected nodes so far to be part of the output dominating set, and a node $u$ is dominated if and only if $u \in N^+(S)$.
Fix any arbitrary parameter $\beta > 1$.
The algorithm runs in $ \ell := \lceil \log_\beta (\Delta+1) \rceil $ iterations.
In the $i^\mathrm{th}$ iteration, every undominated node $u \notin N^+(S)$ votes for one node $p_u$ in its neighborhood $p_u \in N^+(u)$ that covers at least $(\Delta+1) / \beta^i$ undominated nodes (if any).
Then $p_u$ will be added to the solution $S$.

More precisely, at the beginning of iteration $i$, every node $u$ knows which of its neighbors are dominated.
Then, we have the following procedure during iteration $i$.
\begin{enumerate}
    \item Every node $u$ computes $\deg^{(i)}(u) := |N^+(u) \setminus N^+(S)|$, i.e.~the number of undominated nodes in $N^+(u)$.
    Then, $u$ informs its neighbors, whether or not this value is greater than or equal to $(\Delta+1)/\beta^i$.
    
    \item Every \textbf{undominated} node $u$ considers all elements in $N^+(u)$ and votes for a node $p_u \in N^+(u)$ that has $\deg^{(i)}$ value greater than or equal to $(\Delta+1)/\beta^i$, breaking the ties arbitrarily.
    Note that $p_u$ might be equal to $u$ itself.
    If no such element exists (which means for all $v \in N^+(u)$ we have $\deg^{(i)}(v) < (\Delta+1)/\beta^i$), we let $p_u = \bot$.
    Now, if $p_u \neq \bot$, then $u$ informs $p_u$ that it is voted to be part of the solution.

    \item Each voted node $v$ (i.e., there exists an undominated $u$ such that $v=p_u$) informs all of its neighbors that $v$ is voted and is now part of the solution $S$.
    Hence, $S \gets S + v$, and the neighbors of $v$ mark themselves as dominated.
    Note that this step happens simultaneously, and many nodes might be added to $S$.

    \item Every node $u$ informs its neighbors whether $u$ is dominated or undominated at this point.
\end{enumerate}

\begin{remark}
    Before we analyze our algorithm, let us mention that the voting system in our algorithm is necessary.
    More precisely, if there is no voting and each vertex $u \in V$ satisfying $\deg^{(i)}(u) \geq (\Delta + 1)/\beta^i$ directly becomes part of the solution, then we can not bound the approximation ratio.
    The issue arises in the last iterations of the algorithm, roughly in iterations $i \geq i^\star$ where $(\Delta+1)/\beta^{i^\star} \approx \alpha$.
    At those iterations, it can happen that the set of nodes $M$ dominating $(\Delta+1)/\beta^{i}$ many undominated nodes has a much larger size than the number of undominated nodes themselves.
    Hence, although we can just insert all undominated nodes into the solution, adding $M$ to the solution is far from optimal.
    Since the value of $\alpha$ is unknown to the algorithm, we perform the voting in all iterations to avoid the issue.
\end{remark}

\subsection{Analysis}

\paragraph{Correctness.}
According to Step 4 of the iteration, it is obvious that at the beginning of each iteration, every node $u$ correctly knows which of its neighbors are dominated.
It is easy to see that, conditioned on the previous knowledge, the values of $\deg^{(i)}$ are computed correctly, every node correctly knows whether or not it is a candidate, every node correctly knows whether it has been voted, and every node knows whether it is dominated correctly at the end of the iteration.

The algorithm runs while $i \leq \ell = \lceil \log_\beta(\Delta+1) \rceil$.
Note that we have $(\Delta+1)/ \beta^{\ell} \leq 1 $.
This concludes that in the last iteration of the algorithm, for every undominated node $u$, we have $p_u \neq \bot$ (since at least $u$ itself satisfies $\deg^{(i)}(u) \geq 1 \geq (\Delta+1)/ \beta^{\ell}$), and $u$ will be dominated at the end of this iteration.
This concludes that the final solution $S$ of the algorithm is indeed a dominating set.

\paragraph{Round Complexity.}
Clearly, each of the Steps in every iteration can be done in only one round simultaneously for all nodes.
This concludes that the algorithm runs in $4 \cdot \left\lceil \log_\beta (\Delta+1) \right\rceil$ rounds.

\paragraph{Message Sizes.}
The size of the messages between each pair of neighbors in each round is only one bit, since in each step of each iteration, every node $u$ only needs to notify its neighbors about the correctness of a condition.
In the first step, every node $u$ only needs to notify its neighbors whether or not the condition $\deg^{(i)}(u) \geq (\Delta+1)/\beta^i$ holds.
In the second step, every node $u$ only needs to notify $p_u$ that it has been voted for by $u$ (and notify other neighbors that they have not been voted for by $u$).
In the third step, each node $u$ notifies its neighbors whether or not $u \in S$.
In the fourth step, each node $u$ notifies its neighbors whether or not $u$ is dominated.
Hence, every round of the algorithm only needs one bit of communication for each pair of neighbors.

\subsection{Approximation Ratio Analysis}

Let $D^\star$ be an optimum dominating set.
We consider a partition on the solution $S = S_1 \cup \cdots \cup S_\ell$ (recall that $\ell = \lceil \log_\beta(\Delta+1) \rceil$), where $S_i$ consists of the nodes that became part of the solution during iteration $i$ of the algorithm.
For simplicity, define $S_0 := \emptyset$.
For each $1 \leq i \leq \ell$, let $C_i$ be the set of nodes that became dominated at the end of iteration $i$ of the algorithm, i.e.,~$ C_i := N^+(S_i) \setminus N^+(S_0 \cup \cdots \cup S_{i-1}) $.
Hence, $C_1, \ldots, C_\ell$ form a partition of $V$.
For simplicity, define $C_0 := \emptyset$.

The analysis consists of two parts. In the first part, we show that at the end of each iteration, the number of newly dominated nodes is proportional to the number of nodes that become part of the solution, i.e.~$|S_i|$ is proportional to $|C_i|$.
This shows that the choices of the algorithm were good during iteration $i$.
In the second part, we analyze the contribution of the optimum solution while dominating each $C_i$.
We start with the following claim.

\begin{claim}\label{claim:S-C}
    If $S, C \subseteq V$ and  each $s \in S$ has at least $k$ neighbors in $C$, then we have
    $$ \left( \frac{k}{2\alpha}-1 \right) \cdot |S| \leq |C|. $$
\end{claim}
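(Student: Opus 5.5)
We will prove the claim by double counting the edges between $S$ and $C$ inside the subgraph induced by $U := S \cup C$, and then bounding that edge count with the Nash-Williams formula \eqref{eq:arboricity-formula}.

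\textbf{Step 1 (lower bound on edges).} Count the edges of $E[U]$ that have one endpoint in $S$ and the other in $C$. Each $s \in S$ has at least $k$ neighbors in $C$. Each edge $sc$ with $s,c \in S$ could be counted at most twice, once from each endpoint. Hence
$$|E[U]| \geq \frac{k|S|}{2}.$$
This factor $2$ from double counting is what produces the $2\alpha$ in the statement.

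\textbf{Step 2 (upper bound via arboricity).} The subgraph $G[U]$ has arboricity at most $\alpha$. If $|U| \geq 2$, then \eqref{eq:arboricity-formula} gives $|E[U]|/(|U|-1) \leq \alpha$, so
$$|E[U]| \leq \alpha(|U|-1) < \alpha|U| \leq \alpha(|S|+|C|).$$

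\textbf{Step 3 (combine).} From $k|S|/2 \leq \alpha(|S|+|C|)$, dividing by $\alpha$ gives $\left(\frac{k}{2\alpha}-1\right)|S| \leq |C|$, as required.

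\textbf{Degenerate cases.} If $|U| \leq 1$, then either $S=\emptyset$, where the claim is trivial, or $k=0$, where the left-hand side is negative. More generally, if $k \leq 2\alpha$ the left-hand side is non-positive and nothing needs proving. So we may assume $k > 2\alpha \geq 2$ and $S \neq \emptyset$. Then every $s \in S$ has at least $k \geq 1$ neighbors, so $|U| \geq 2$ and Step 2 applies.

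The only step needing care is Step 1. Since $S$ and $C$ may overlap, edges inside $S \cap C$ can be counted twice, and that is exactly why the argument divides by $2$. Everything else is a direct application of Nash-Williams.
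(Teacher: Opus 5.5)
Your proof is correct and matches the paper's argument: double counting gives $k|S|/2 \le |E[S\cup C]|$, and the Nash-Williams formula gives $|E[S\cup C]| \le \alpha(|S\cup C|-1) \le \alpha(|S|+|C|)$. You also treat the degenerate case $|S\cup C|<2$ separately, where \eqref{eq:arboricity-formula} as stated does not apply; the paper skips this.
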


\begin{proof}
    There are at least $(k/2) \cdot |S|$ edges between nodes in $S \cup C$ since each node in $S$ has at least $k$ neighbors in $C$, and each edge is counted at most twice (if both of its endpoints belong to $S \cap C$).
    Hence,
    \begin{equation*}
        \frac{k \cdot |S|}{2} \leq |E[S \cup C]| \leq \alpha \cdot (|S \cup C| - 1) \leq \alpha \cdot (|S| + |C|),
    \end{equation*}
    where the second inequality follows by \Cref{eq:arboricity-formula}.
    This concludes the claim.
\end{proof}

\begin{corollary}\label{cor:Si-compare-to-Ci}
    For each $1 \leq i \leq \ell$, we have
    $$ \frac{\Delta+1}{5\alpha \beta^{i}} \cdot |S_i| \leq |C_i|. $$
\end{corollary}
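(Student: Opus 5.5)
The plan is to combine \Cref{claim:S-C} with a simple injection from $S_i$ into $C_i$ given by the votes, and to split into two cases depending on how large the threshold $\tau_i := (\Delta+1)/\beta^i$ is compared to $\alpha$. Throughout I assume $\alpha \geq 1$. If $G$ has no edges, every node is isolated and votes for itself, so $S_i = C_i$ and the statement is immediate.

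\textbf{Step 1: every $v \in S_i$ has at least $\tau_i - 1$ neighbors in $C_i$.} A node $v \in S_i$ was voted for by some undominated $u$ in iteration $i$. By Step 2 of the algorithm this means $\deg^{(i)}(v) \geq \tau_i$. So $N^+(v)$ contains at least $\tau_i$ nodes that are undominated at the start of iteration $i$, i.e.\ not in $N^+(S_0 \cup \cdots \cup S_{i-1})$. Since $v$ joins $S_i$, all of them lie in $N^+(S_i)$, hence in $C_i$. Removing $v$ itself, which may be one of them, leaves at least $\tau_i - 1$ genuine neighbors of $v$ in $C_i$.

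\textbf{Step 2: $|S_i| \leq |C_i|$.} For each $v \in S_i$, pick one undominated voter $u_v$ with $p_{u_v} = v$. Each undominated node casts exactly one vote, so distinct $v$ have distinct $u_v$. Moreover $u_v$ was undominated before iteration $i$ and $u_v \in N^+(v) \subseteq N^+(S_i)$, so $u_v \in C_i$. Thus $v \mapsto u_v$ is an injection from $S_i$ into $C_i$.

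\textbf{Step 3: case analysis.}
\begin{itemize}
\item If $\tau_i \leq 5\alpha$, then $\frac{\tau_i}{5\alpha}|S_i| \leq |S_i| \leq |C_i|$ by Step 2.
\item If $\tau_i > 5\alpha$, apply \Cref{claim:S-C} with $S = S_i$, $C = C_i$ and $k = \tau_i - 1$, using Step 1. This gives $\left(\frac{\tau_i - 1}{2\alpha} - 1\right)|S_i| \leq |C_i|$.
\end{itemize}
In the second case it remains to check that $\frac{\tau_i-1}{2\alpha} - 1 \geq \frac{\tau_i}{5\alpha}$. Multiplying by $10\alpha$, this is equivalent to $3\tau_i \geq 10\alpha + 5$. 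Since $\tau_i > 5\alpha$, we have $3\tau_i > 15\alpha = 10\alpha + 5\alpha \geq 10\alpha + 5$, using $\alpha \geq 1$.

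The only delicate point is the small-threshold regime, where \Cref{claim:S-C} alone gives a trivial or even negative bound. This is exactly where the voting mechanism (Step 2's injection) is needed, consistent with the remark preceding the analysis. The other point needing care is the off-by-one from $v \in N^+(v)$, which is absorbed by the slack between the constants $2$ and $5$.
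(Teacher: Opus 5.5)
Your proof is correct and follows the paper's argument. It splits at $\tau_i=(\Delta+1)/\beta^i$ versus $5\alpha$, applies \Cref{claim:S-C} with $k=\tau_i-1$ in the large-threshold case, and uses the voting injection $S_i \hookrightarrow C_i$ in the small-threshold case; beyond the paper, you spell out the arithmetic check $3\tau_i \geq 10\alpha+5$ (which uses $\alpha\geq 1$) that the paper leaves implicit.
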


\begin{proof}
We consider two cases.
\begin{itemize}
    \item $i$ is small, more precisely, $(\Delta+1) / (5\alpha\beta^{i}) \geq 1 $.
    According to the algorithm, each $s \in S_i$ has at least $(\Delta+1) / \beta^{i} - 1$ neighbors in $C_i$ (the undominated elements of $N^+(s)$, excluding $s$ itself at the beginning of iteration $i$).
    Thus, \Cref{claim:S-C} and the assumption of this case conclude
    \begin{equation*}
        \frac{\Delta+1}{5\alpha\beta^{i}}  \cdot |S_i|
        \leq \left( \frac{(\Delta+1)/\beta^i - 1}{2\alpha}-1 \right) \cdot |S_i| \leq |C_i|.
    \end{equation*}
    
    \item $i$ is large, more precisely $(\Delta+1) / (5\alpha\beta^{i}) \leq 1 $.
    According to the algorithm, we have $|S_i| \leq |C_i|$ since for each $v \in S_i$, there exists at least one $u \in C_i$ (where $v=p_u$) that voted for $v$, and each $u$ votes for at most one node $p_u \in S_i$.
    Combining with the assumption of this case, we have
    $$ \frac{\Delta+1}{5\alpha\beta^{i}}  \cdot |S_i| \leq |S_i| \leq |C_i|. \qedhere$$
\end{itemize}
\end{proof}

For each $1 \leq i \leq \ell$, assume that $D_i$ is the set of nodes in the optimum solution $D^\star$ that dominate at least one node in $C_i$, but do not dominate any node in $C_0 \cup \cdots \cup C_{i-1}$, i.e.,~$D_i := D^\star \cap \left( N^+(C_i) \setminus N^+(C_0 \cup \cdots \cup C_{i-1}) \right)$.
It is easy to see that $D_1, \ldots, D_\ell$ is a partition of $D^\star$.
For each $1 \leq i \leq \ell$, we split $D_i$ into two subsets $D_i'$ and $D_i'':= D_i - D_i'$ as follows.
$D_i'$ is the set of nodes in $D_i$ that does not dominate any node outside $C_{i}$, i.e., $D_i' := D_i \cap \left(  N^+(C_i) \setminus  N^+(C_{i+1} \cup \cdots \cup C_{\ell}) \right)$.
Intuitively, $D_i'$ does not contribute to dominating any node outside $C_i$.
But, $D_i''$ has the potential to contribute to dominating nodes in $C_j$ for $j > i$.
\Cref{fig} illustrates the connection between these objects.
Note that for the last index $i = \ell$, we always have $D_\ell'' = \emptyset$ since $D^\star$ is a dominating set.

\input{fig}

\begin{lemma}\label{lem:Ci-compare-to-Di}
    For each $1 \leq j \leq \ell$, we have
    $$ |C_j| \leq \frac{\Delta+1}{\beta^{j-1}} \cdot |D_j'| + \sum_{i=1}^j |N^+(D_i'', C_j)|. $$
\end{lemma}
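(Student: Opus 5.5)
The plan is to count $C_j$ by which optimum node dominates each element, and then bound each resulting piece separately. Since $D^\star$ is a dominating set and $D_1,\dots,D_\ell$ partition $D^\star$, every $u \in C_j$ has some dominator $d \in D^\star$ with $u \in N^+(d)$. That $d$ lies in some $D_i$.

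\textbf{Step 1: only $i \le j$ occurs.} By definition, $D_i$ contains no node dominating anything in $C_0 \cup \cdots \cup C_{i-1}$. A node $d \in D_i$ with $i > j$ therefore cannot dominate $u \in C_j$. Hence
\[
C_j \subseteq \bigcup_{i=1}^{j} N^+(D_i, C_j) = \bigcup_{i=1}^{j} \bigl( N^+(D_i', C_j) \cup N^+(D_i'', C_j) \bigr).
\]

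\textbf{Step 2: only $D_j'$ contributes on the primed side.} For $i < j$, a node of $D_i'$ dominates nothing in $C_{i+1}\cup\cdots\cup C_\ell$, which contains $C_j$. So $N^+(D_i', C_j) = \emptyset$ for $i<j$. The union bound then gives
\[
|C_j| \le |N^+(D_j', C_j)| + \sum_{i=1}^{j} |N^+(D_i'', C_j)|.
\]

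\textbf{Step 3: bound $|N^+(D_j', C_j)|$.} This is the step that needs care. The target is to show that each $d \in D_j'$ satisfies $|N^+(d) \cap C_j| < (\Delta+1)/\beta^{j-1}$; summing over $d \in D_j'$ then finishes the proof.

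For $j = 1$ the bound holds trivially, since $|N^+(d)| \le \Delta+1 = (\Delta+1)/\beta^0$.

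For $j \ge 2$, we use iteration $j-1$. Because $d \in D_j$, the node $d$ dominates nothing in $C_1\cup\cdots\cup C_{j-1}$. In particular $d$ itself is not dominated before iteration $j$, so $d$ was undominated throughout iteration $j-1$. Every element of $N^+(d)\cap C_j$ was also undominated at the start of iteration $j-1$. Hence
\[
\deg^{(j-1)}(d) \ge |N^+(d)\cap C_j|.
\]
Suppose for contradiction that $|N^+(d)\cap C_j| \ge (\Delta+1)/\beta^{j-1}$. Then $d$ met the threshold in iteration $j-1$. Since $d$ was undominated then, $d$ voted: $d \in N^+(d)$ is a valid candidate, so $p_d \ne \bot$. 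The node $p_d$ joined $S$, which made $d$ dominated by the end of iteration $j-1$. This means $d \in C_{j'}$ for some $j' \le j-1$, and so $d$ dominates a node of $C_{j'}$ (namely itself), contradicting $d \in D_j$.

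Therefore $|N^+(d)\cap C_j| < (\Delta+1)/\beta^{j-1}$ for every $d \in D_j'$, which gives $|N^+(D_j', C_j)| \le \frac{\Delta+1}{\beta^{j-1}}\,|D_j'|$ and completes the proof. The only subtlety is the one handled in Step 3: the threshold argument requires $d$ to be undominated in iteration $j-1$, and this follows from $d \in N^+(d)$ together with the definition of $D_j$.
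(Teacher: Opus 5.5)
Your proof is correct and follows the paper's argument. The paper bounds each $d\in D_j'$ by the threshold of iteration $j-1$, where $d\in N^+(d)$ keeps $d$ undominated so that it votes, and it notes that every other dominator of $C_j$ lies in $\bigcup_{i\le j} D_i''$; you do the same, but justify these steps more explicitly than the paper. One cosmetic slip: for $j=1$ you only get $\le$ rather than the strict $<$ you announce, but $\le$ is all the lemma needs.
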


\begin{proof}
    We claim that the number of nodes in $C_j$ that can be dominated by an arbitrary $v \in D_j'$ is at most $\frac{\Delta+1}{\beta^{j-1}}$. 
    The case $j=1$ follows from $\deg(v) \leq \Delta$.
    For $j > 1$, assume $ |C_j \cap N^+(v)| >\frac{\Delta+1}{\beta^{j-1}} $.
    This means that in the $(j-1)^{\text{th}}$ iteration of the algorithm, we have $\deg^{(j-1)}(v) > \frac{\Delta+1}{\beta^{j-1}}$.
    Hence, $v$ voted for $p_v$ and $v$ became dominated in iteration $(j-1)$ of the algorithm, which is in contradiction with the assumption that $v \in C_j$.
    As a result, each $v \in D_j'$ can dominate at most $\frac{\Delta+1}{\beta^{j-1}}$ elements of $C_j$, which concludes 
    $$|N^+(D_j', C_j)| \leq \frac{\Delta+1}{\beta^{j-1}} \cdot |D_j'| . $$
    The number of nodes in $C_j$ that can be dominated by $\cup_{i=1}^j D_i''$ is trivially at most
    $\sum_{i=1}^j |N^+(D_i'', C_j)|$.
    According to the definition of the partitioning on $D^\star$, each node $v \in D^\star$ satisfying $N^+(v) \cap C_j \neq \emptyset$ (contributing to dominating $C_j$) belongs to at least one of $D_j'$ and $\cup_{i=1}^j D_i''$.
    Since $D^\star$ is a dominating set, it should dominate the whole $C_j$, which concludes
    \begin{align*}
        |C_j| &= |N^+(D^\star, C_j)| 
        \leq |N^+(D_j', C_j)| + \sum_{i=1}^j |N^+(D_i'', C_j)| \\
        &\leq \frac{\Delta+1}{\beta^{j-1}}  \cdot |D_j'| + \sum_{i=1}^j |N^+(D_i'', C_j)|. \qedhere
    \end{align*}
\end{proof}

\begin{lemma}\label{lem:upper-bound-for-S}
We have
    $$ |S| \leq 5\alpha\beta \cdot |D'| +  \frac{5\alpha}{\Delta+1}\cdot \sum_{j=1}^\ell \sum_{i=1}^j \beta^j \cdot |N^+(D_i'', C_j)|, $$
    where $D' = \cup_{j=1}^\ell D_j'$.
\end{lemma}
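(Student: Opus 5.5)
The plan is to combine \Cref{cor:Si-compare-to-Ci} and \Cref{lem:Ci-compare-to-Di} iteration by iteration, and then sum over $j$. Since $S = S_1 \cup \cdots \cup S_\ell$ is a disjoint union, we have $|S| = \sum_{j=1}^\ell |S_j|$. It therefore suffices to bound each $|S_j|$ separately and add the bounds.

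For a fixed $1 \leq j \leq \ell$, rearrange \Cref{cor:Si-compare-to-Ci} into
$$ |S_j| \leq \frac{5\alpha\beta^j}{\Delta+1} \cdot |C_j|. $$
Then substitute the bound on $|C_j|$ from \Cref{lem:Ci-compare-to-Di}:
$$ |S_j| \leq \frac{5\alpha\beta^j}{\Delta+1}\cdot \frac{\Delta+1}{\beta^{j-1}} \cdot |D_j'| + \frac{5\alpha\beta^j}{\Delta+1} \sum_{i=1}^j |N^+(D_i'', C_j)| = 5\alpha\beta\cdot |D_j'| + \frac{5\alpha}{\Delta+1}\sum_{i=1}^j \beta^j |N^+(D_i'', C_j)|. $$
The key point is that the factor $\beta^j$ coming from the corollary cancels against the factor $\beta^{j-1}$ in the lemma. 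What remains is the single factor $\beta$, which is independent of $j$.

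Summing over $j = 1, \ldots, \ell$ gives
$$ |S| \leq 5\alpha\beta\sum_{j=1}^\ell |D_j'| + \frac{5\alpha}{\Delta+1}\sum_{j=1}^\ell\sum_{i=1}^j \beta^j |N^+(D_i'',C_j)|. $$
Finally, we identify $\sum_{j} |D_j'|$ with $|D'|$. This holds because the sets $D_j' \subseteq D_j$ are pairwise disjoint, as $D_1, \ldots, D_\ell$ partition $D^\star$.

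I expect no real obstacle here, since the argument is a direct algebraic combination of the two earlier results. The only points needing care are the exact cancellation of the powers of $\beta$ and the disjointness of the $D_j'$, which is what lets the sum collapse to $|D'|$.
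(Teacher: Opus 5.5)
Your proposal is correct and matches the paper's proof: you rearrange \Cref{cor:Si-compare-to-Ci} to $|S_j| \leq \frac{5\alpha\beta^j}{\Delta+1}|C_j|$, substitute \Cref{lem:Ci-compare-to-Di} so that $\beta^j/\beta^{j-1}=\beta$, and sum over $j$. Your explicit remarks that $|S| = \sum_j |S_j|$ and $\sum_j |D_j'| = |D'|$ (because the $D_j' \subseteq D_j$ are disjoint) are the details the paper leaves implicit in its final summation step.
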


\begin{proof}
    According to \Cref{cor:Si-compare-to-Ci} and \Cref{lem:Ci-compare-to-Di}, for each $1 \leq j \leq \ell$, we have
    \begin{align*}
    |S_j|  
    \leq \frac{5\alpha \beta^j}{\Delta+1} \cdot |C_j| &\leq  \frac{5\alpha \beta^j}{\Delta+1} \cdot \left(  \frac{\Delta+1}{\beta^{j-1}} \cdot |D_j'| + \sum_{i=1}^j |N^+(D_i'', C_j)| \right)  \\
    &\leq 5\alpha\beta \cdot |D_j'| + \frac{5\alpha}{\Delta+1} \cdot \sum_{i=1}^j \beta^j \cdot |N^+(D_i'', C_j)| .
    \end{align*}
    The lemma follows by summing up these inequalities for all $1 \leq j \leq \ell$.
\end{proof}

\begin{lemma}\label{lem:v-star}
    For each $1 \leq i \leq \ell$ and every $v^\star \in D_i''$, we have
    $$  \sum_{j=i}^\ell \beta^{j} \cdot |N^+(v^\star, C_j)| \leq (\log_\beta(\Delta+1) + \beta) \cdot (\Delta+1). $$
\end{lemma}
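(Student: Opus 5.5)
We fix $v^\star \in D_i''$ and write $a_j := |N^+(v^\star, C_j)|$ for $i \le j \le \ell$. Since the $C_j$ are disjoint, $\sum_j a_j \le |N^+(v^\star)| \le \Delta+1$. The key observation is the same one used in the proof of \Cref{lem:Ci-compare-to-Di}: the nodes of $C_j \cup C_{j+1} \cup \cdots \cup C_\ell$ are all still undominated at the beginning of iteration $j$. So if $v^\star$ has many neighbors among them, then $\deg^{(j)}(v^\star)$ is large at that iteration.

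\textbf{Step 1: a tail bound.} We first show that for every $j$ with $i < j \le \ell$,
$$ \sum_{k=j}^{\ell} a_k \;<\; \frac{\Delta+1}{\beta^{j-1}}. $$
Suppose not. Then at iteration $j-1$ all nodes of $N^+(v^\star)\cap(C_j\cup\cdots\cup C_\ell)$ are undominated, so $\deg^{(j-1)}(v^\star) \ge (\Delta+1)/\beta^{j-1}$. Every such undominated neighbor $u$ of $v^\star$ therefore has a candidate in $N^+(u)$ (namely $v^\star$), so $p_u\neq\bot$. Hence $u$ becomes dominated in iteration $j-1$, contradicting $u\in C_k$ for some $k\ge j$. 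Note that the set in question is nonempty, since otherwise the sum is $0$ and the bound holds trivially. For $j = i$ we only use the trivial bound $\sum_{k\ge i} a_k \le \Delta+1$.

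\textbf{Step 2: Abel summation.} Let $T_j := \sum_{k\ge j} a_k$, so $a_j = T_j - T_{j+1}$ with $T_{\ell+1}=0$. Then
$$\sum_{j=i}^\ell \beta^j a_j = \beta^i T_i + \sum_{j=i+1}^{\ell} (\beta^j - \beta^{j-1}) T_j \le \beta^i T_i + \sum_{j=i+1}^{\ell} \beta^j T_j.$$
By Step 1, each term with $j>i$ satisfies $\beta^j T_j < \beta(\Delta+1)$. There are at most $\ell - i$ such terms, which gives roughly $\beta(\ell-i)(\Delta+1)$.

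This is too weak: we want $\log_\beta(\Delta+1)+\beta$, not $\beta\log_\beta(\Delta+1)$. So the summation must be done more carefully. Instead, we split the contribution of each $a_j$ directly. For $j>i$, Step 1 gives $a_j \le T_j < (\Delta+1)/\beta^{j-1}$, so $\beta^j a_j < \beta(\Delta+1)$ only term by term, which again loses a factor. The better route uses $\beta^{j-1} a_j \le \beta^{j-1}T_j <\Delta+1$ for the bulk, and charges only one extra factor $\beta$ once. Concretely, write $\beta^j a_j = \beta^{j-1}a_j + (\beta^j-\beta^{j-1})a_j$ and handle the two parts separately.

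\textbf{Step 3: bounding the two parts.}
The first part is $\sum_{j} \beta^{j-1} a_j \le \sum_{j}(\Delta+1)$. This uses $a_j\le T_j$ for $j>i$; for $j=i$ we need $\beta^{i-1}a_i\le \Delta+1$. That holds because $v^\star\in D_i$ has a neighbor in $C_i$ that is undominated at iteration $i-1$; if $v^\star$ had more than $(\Delta+1)/\beta^{i-1}$ undominated neighbors then, that neighbor would have been dominated at iteration $i-1$, by the same argument as in Step 1. Summing over $j$ gives at most $\ell(\Delta+1)$, which is about $\log_\beta(\Delta+1)\cdot(\Delta+1)$.

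The second part is $\sum_j(\beta^j-\beta^{j-1})a_j$, and we bound it by Abel summation again, this time with weights $\beta^{j-1}$ in place of $\beta^j$. For $j > i$, Step 1 gives $\beta^{j-1}T_j<\Delta+1$. The goal is to show this part is at most $\beta(\Delta+1)$, for example as $(\beta-1)\sum_j \beta^{j-1}a_j$ controlled via the tail bounds.

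\textbf{Main obstacle.} The hardest step is getting the additive form $\log_\beta(\Delta+1)+\beta$ rather than the multiplicative $\beta\log_\beta(\Delta+1)$. The route I would try first is the one above: the terms $\beta^{j-1}T_j\le\Delta+1$ supply the $\ell(\Delta+1)$ part, and the single jump from $\beta^{j-1}$ to $\beta^j$ is absorbed once, using $\sum_j a_j\le\Delta+1$ together with the geometric tail bounds. If that fails, the fallback is to split the indices $j$ according to whether $a_j$ is near its cap $(\Delta+1)/\beta^{j-1}$, and to use that $\sum_j a_j\le \Delta+1$ allows only $O(1)$ near-capped indices after rescaling.
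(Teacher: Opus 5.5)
\textbf{There is a genuine gap here, not just a summation issue.} Your Step~1 is correct, but it is off by one iteration. You compare the tail $T_j=\sum_{k\ge j}a_k$ with the threshold of iteration $j-1$. The constraints you end up with are $T_j<(\Delta+1)/\beta^{j-1}$ for $j>i$, $\beta^{i-1}a_i\le\Delta+1$, and $\sum_j a_j\le\Delta+1$, and these do not imply the lemma. Take $\beta$ an integer, $\Delta+1=\beta^\ell$, $i=1$ and $a_j=(\beta-1)\beta^{\ell-j}$. Then $T_j=\beta^{\ell-j+1}-1<(\Delta+1)/\beta^{j-1}$ and $\sum_j a_j<\Delta+1$, yet $\sum_j\beta^j a_j=(\beta-1)\ell(\Delta+1)$. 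For $\beta=10$, $\ell=6$ this is $54(\Delta+1)$, against the target $16(\Delta+1)$. So your Abel-summation plan cannot bound the ``second part'' by $\beta(\Delta+1)$; in this example that part equals $(\beta-1)^2\ell(\Delta+1)/\beta$. The fallback fails too: every index is within a factor $1-1/\beta$ of its cap, because the caps decay geometrically and their total is only about $\Delta+1$.

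\textbf{The missing idea is a same-iteration tail bound.} Let $t$ be the largest index with $a_t>0$; since $v^\star\in D_i''$, $t\ge i+1$. Then for every $i\le k\le t-1$,
\[
T_k=a_k+\cdots+a_t\le(\Delta+1)/\beta^k .
\]
To see this, note that every node counted in $T_k$ is undominated at the start of iteration $k$, in particular some $u\in N^+(v^\star)\cap C_t$. Otherwise $\deg^{(k)}(v^\star)>(\Delta+1)/\beta^k$, so $u$ would vote and be dominated in iteration $k<t$, a contradiction.

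At iteration $j-1$ you count only the neighbors in $C_j\cup\cdots\cup C_\ell$. But $\deg^{(j-1)}(v^\star)$ also counts the $a_{j-1}$ neighbors in $C_{j-1}$, which are equally undominated at the start of that iteration. Only the witness $u$ has to survive the iteration, not every counted node. Dropping $a_{j-1}$ is exactly what costs you a factor $\beta$ at every index.

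With the sharp bound, $\beta^k a_k\le\Delta+1$ for $i\le k\le t-1$, and only the last index pays the extra factor: $\beta^t a_t\le\beta^t T_{t-1}\le\beta(\Delta+1)$. Summing gives $(t-i+\beta)(\Delta+1)\le(\ell-1+\beta)(\Delta+1)\le(\log_\beta(\Delta+1)+\beta)(\Delta+1)$ directly, with no Abel summation needed.
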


\begin{proof}
    Since $D''_\ell = \emptyset$, we can simply assume that $i < \ell$.
    Let $x_j := |N^+(v^\star, C_j)|$ for each $i \leq j \leq \ell$.
    Assume $t$ is the maximum index such that $N^+(v^\star) \cap C_t \neq \emptyset$.
    Note that according to the definition of $D_i''$, we have $t \geq i + 1$.
    Now, consider an arbitrary $i \leq k \leq t-1$.
    We claim that
    \begin{equation}\label{eq:sum-xi}
     x_k + x_{k+1} + \cdots + x_t \leq (\Delta+1)/\beta^{k}    .
    \end{equation}
    For the sake of contradiction, assume this is not the case.
    Since none of the nodes in $C_k \cup C_{k+1} \cup \cdots \cup C_t$ are dominated at the beginning of iteration $k$ of the algorithm, we conclude 
    $$\deg^{(k)}(v^\star) \geq x_k + x_{k+1} + \cdots + x_t > (\Delta+1)/\beta^{k} . $$
    Now, consider an arbitrary $ u \in N^+(v^\star) \cap C_{t}$. This implies that $v^\star \in N^+(u)$ has $\deg^{(k)}(v^\star) > (\Delta+1)/\beta^{k}$, which means $u$ voted for $p_u$ (which also has $\deg^{(k)}$ at least $(\Delta+1)/\beta^{k}$) in iteration $k$ of the algorithm and this is in contradiction with assuming $u \in C_{t}$ and $t \geq k+1$ as $u$ would become dominated in iteration $k$ not $t$. As a result, \Cref{eq:sum-xi} holds.

   We conclude that for each $i \leq k \leq t-1$, we have $x_k \leq (\Delta+1)/\beta^k$ and for the special case $k = t-1$, we also conclude that $x_t \leq x_{t-1} + x_t \leq (\Delta+1)/\beta^{t-1}$
   (Note that the inequality $x_t \leq (\Delta+1)/\beta^t$ is \textbf{not} necessarily true, and we lose a factor $\beta$ to bound the last $x_t$).
   Finally,
    \begin{align*}
        \sum_{j=i}^\ell \beta^j \cdot x_j 
        &= 
        \left(\sum_{j=i}^{t-1} \beta^j \cdot x_j \right)+ \beta^t \cdot x_t
        \leq
        \left( \sum_{j=i}^{t-1} (\Delta+1) \right) + \beta \cdot (\Delta+1) \\
        &\leq 
        (\ell -1 + \beta )\cdot (\Delta+1) \leq (\log_\beta(\Delta+1) + \beta) \cdot (\Delta+1),
    \end{align*}
    where we used $x_{t+1} = \cdots = x_\ell = 0$ according to the maximality of $t$.
\end{proof}

\begin{corollary}\label{cor:sum-neighbors-bound}
    We have
    $$  \sum_{i=1}^\ell \sum_{j=i}^\ell \beta^{j} \cdot |N^+(D_i'', C_j)| \leq \left( \log_\beta(\Delta+1) + \beta \right) \cdot (\Delta+1) \cdot |D''|, $$
    where $D'' = \cup_{i=1}^\ell D_i''$.
\end{corollary}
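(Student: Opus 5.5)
The plan is to reduce the corollary to \Cref{lem:v-star} by exchanging the neighborhood of a set for a sum over its elements, and then summing over all $v^\star \in D''$.

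First, for fixed $i$ and $j$, bound the set neighborhood by the union bound:
$$|N^+(D_i'', C_j)| = \Bigl|\bigcup_{v^\star \in D_i''} N^+(v^\star, C_j)\Bigr| \leq \sum_{v^\star \in D_i''} |N^+(v^\star, C_j)|.$$
Multiply by $\beta^j \geq 0$ and sum over $i \leq j \leq \ell$. After swapping the two finite sums, the inner sum for each fixed $v^\star \in D_i''$ is exactly $\sum_{j=i}^\ell \beta^j |N^+(v^\star, C_j)|$.

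Next, apply \Cref{lem:v-star} to each such $v^\star$. This bounds the inner quantity by $(\log_\beta(\Delta+1)+\beta)(\Delta+1)$. Hence, for each fixed $i$, the contribution is at most $(\log_\beta(\Delta+1)+\beta)(\Delta+1)\,|D_i''|$.

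Finally, sum over $i = 1, \ldots, \ell$. The sets $D_i''$ are pairwise disjoint, because $D_i'' \subseteq D_i$ and $D_1, \ldots, D_\ell$ partition $D^\star$. Therefore $\sum_i |D_i''| = |D''|$, which gives the claimed bound.

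There is no real obstacle here. The only points requiring care are the direction of the union-bound inequality (it goes the right way because we want an upper bound) and the disjointness of the $D_i''$, so that $|D''|$ is not overcounted.
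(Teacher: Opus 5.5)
Your proposal is correct and matches the paper's proof: you apply the union bound over $v^\star \in D_i''$, swap the sums, apply \Cref{lem:v-star} to each $v^\star$, and sum over $i$. You also make explicit that the $D_i''$ are pairwise disjoint, so $\sum_i |D_i''| = |D''|$; the paper uses this step without stating it.
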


\begin{proof}
    Fix an $1 \leq i \leq \ell$. Since $N^+(D_i'', C_j) = \cup_{v \in D_i''} N^+(v,C_j)$, by \Cref{lem:v-star} for all $v \in D_i''$, we conclude that
    \begin{align*}
    \sum_{j=i}^\ell \beta^{j} \cdot |N^+(D_i'', C_j)| 
    &\leq \sum_{j=i}^\ell \beta^{j} \cdot \sum_{v \in D_i''} |N^+(v, C_j)| \\
    &= \sum_{v \in D_i''} \sum_{j=i}^\ell \beta^{j} \cdot |N^+(v, C_j)| \\
    &
    \leq \sum_{v \in D_i''} (\log_\beta(\Delta+1) + \beta ) \cdot (\Delta+1) \\
    &= (\log_\beta(\Delta+1) + \beta ) \cdot (\Delta+1) \cdot |D_i''| .
    \end{align*}
    Summing up this inequality for all $1 \leq i \leq \ell$ concludes the corollary.
\end{proof}

Finally, we get an upper bound on the approximation ratio of our algorithm as follows.
Combining \Cref{lem:upper-bound-for-S} and \Cref{cor:sum-neighbors-bound}, we have
\begin{align*}
    |S| 
    &\leq 
    5\alpha \beta \cdot |D'| + \frac{5\alpha}{\Delta+1}\cdot \sum_{j=1}^\ell \sum_{i=1}^j \beta^j \cdot |N^+(D_i'', C_j)| \\
    &= 5\alpha \beta \cdot |D'| + \frac{5\alpha}{\Delta+1}\cdot \sum_{i=1}^\ell \sum_{j=i}^\ell \beta^j \cdot |N^+(D_i'', C_j)| \\
    &\leq
    5\alpha\beta \cdot |D'| + \frac{5\alpha}{\Delta+1}\cdot 
    \left( \log_\beta(\Delta+1) + \beta \right) \cdot (\Delta+1) \cdot |D''| \\
    &\leq 10\alpha \cdot (\beta + \log_\beta (\Delta+1)) \cdot |D^\star|.
\end{align*}
This completes the approximation ratio analysis.

%% file: fig.tex
\begin{figure}[h!]
  \centering
\begin{tikzpicture}[node distance=1.5cm and 2cm,
  every node/.style={draw, minimum width=0.5cm, minimum height=0.4cm},
  -, >=Latex]

\node (D1p) {$D_1''$};
\node (C1) [below=of D1p, minimum width=1cm, minimum height=0.5cm] {$C_1$};
\node (D1pp) [node distance=0.0cm, below=of C1, yshift=-0.4cm] {$D_1'$};

\node (D2p) [right=of D1p] {$D_2''$};
\node (C2) [below=of D2p, minimum width=1cm, minimum height=0.5cm] {$C_2$};
\node (D2pp) [node distance=0.0cm, below=of C2, yshift=-0.4cm] {$D_2'$};

\node (D3p) [right=of D2p] {$D_3''$};
\node (C3) [below=of D3p, minimum width=1cm, minimum height=0.5cm] {$C_3$};
\node (D3pp) [node distance=0.0cm, below=of C3, yshift=-0.4cm] {$D_3'$};

\node (DLp) [right=of D3p, xshift=1cm] {$D_\ell''$};
\node (CL) [below=of DLp, minimum width=1cm, minimum height=0.5cm] {$C_\ell$};
\node (DLpp) [node distance=0.0cm, below=of CL, yshift=-0.4cm] {$D_\ell'$};

\draw (D1pp) -- (C1);
\draw[bend left=3] (D1p) -- (C2);
\draw[bend left=3] (D1p) to (C3);

\draw (D2pp) -- (C2);
\draw[bend left=3] (D2p) -- (C3);

\draw (D3pp) -- (C3);

\draw (DLpp) -- (CL);

\draw[bend left=3] (D1p) to (CL);
\draw[bend left=3] (D2p) -- (CL);
\draw[bend left=3] (D3p) -- (CL);

\draw[bend right=5] (D1p) to (C1);
\draw[bend right=5] (D2p) to (C2);
\draw[bend right=5] (D3p) to (C3);
\draw[bend right=5] (DLp) to (CL);

\node[draw=none, fill=none] at ($(D3p)!0.5!(DLp)$) {\Large$\cdots$};
\node[draw=none, fill=none] at ($(C3)!0.5!(CL)$) {\Large$\cdots$};

\end{tikzpicture}
\caption{Connection between different parts of the optimal solution $D^\star$ and the $C_i$'s.
For each $i \leq j$ there might be an edge between $D_i''$ and $C_j$ in the graph, and for each $i > j$, there is no edge between $D_i''$ and $C_j$.
For each $i \neq j$, there is no edge between $C_i$ and $D_j'$ in the original graph.
Note that $D_i'$ or $D_i''$ can have an intersection with some of the $C_j$'s as well, but for simplicity, we drew them separately.}
\label{fig}
\end{figure}

%% file: table-comprehensive.tex
\renewcommand{\arraystretch}{1.9}
\begin{table}[h!]
    \caption{State-of-the-art for distributed minimum dominating set on graphs with arboricity at most $\alpha$. Except for \cite{DGI22}, all of the upper bounds work only on the unweighted case.}
    \label{tab:MDS2}
    \vspace{0.3cm}
    \centering
    \begin{tabular}{ |c|c|c|c|c| }
     \hline
     Approximation & Rounds & Deterministic & Knowledge of $\alpha$ &  Paper \\
     \hline
     $O(\alpha^2)$ & $O(\log n)$ & $\times$  & $\times$ & \multirow{2}{3.2em}{\cite{LW10}} \\ 
     $O(\alpha \cdot \log \Delta)$ & $O(\log \Delta)$ & \checkmark & $\times$ & \\
\hline
     $O(\alpha^2)$ & $O(\log n)$ & \checkmark & \checkmark & \cite{Amiri21} \\     
\hline
     $(2\alpha+1)(1+\varepsilon)$ & $O(\varepsilon^{-4} \cdot \log^2 \Delta)$ & \checkmark & \checkmark & \cite{BU17, KMW06} \\
     \hline
     $O(\alpha)$ & $O(\alpha \cdot \log n)$ & $\times$ & $\times$ & \cite{MSW21} \\
     \hline
     $(2\alpha+1) \cdot (1+\varepsilon)$ & $O(\varepsilon^{-1} \cdot \log \frac{\Delta}{\alpha})$ & \checkmark & \checkmark & \multirow{4}{3.6em}{\cite{DGI22}} \\
     $(4\alpha+2) \cdot (1+\varepsilon)$ & $O(\varepsilon^{-1} \cdot \log n)$ & \checkmark & $\times$ &  \\
     Expected $\alpha+O(\log \alpha) $ & $O(\frac{\alpha \log \Delta}{\log \alpha})$ & $\times$ & \checkmark &  \\
     poly-logarithmic & $\Omega( \frac{\log \Delta}{\log\log \Delta})$ & - & $\alpha \geq 2$ &  \\
     \hline
     $\alpha - 1 - \varepsilon$ & NP-Hard & - & - & \cite{BU17} \\
     \hline
     $O(\alpha\cdot \frac{\log \Delta}{\log\log \Delta}) $ & $O( \frac{\log \Delta}{\log\log \Delta})$ & \checkmark & $\times$ & \makecell[c]{\textbf{Our}\\\textbf{Result}} \\
     \hline
    \end{tabular}
\end{table}
\renewcommand{\arraystretch}{1}